\newcommand{\haf}{\operatorname{Haf}}
\newtheorem{prop}{Proposition}
\newtheorem{thm}{Theorem}
\begin{document}

\begin{abstract}
    Gaussian Boson Samplers aim to demonstrate quantum advantage by performing a sampling task believed to be classically hard. The probabilities of individual outcomes in the sampling experiment are determined by the Hafnian of an appropriately constructed symmetric matrix. For nonnegative matrices, there is a family of randomized estimators of the Hafnian based on generating a particular random matrix and calculating its determinant. While these estimators are unbiased (the mean of the determinant is equal to the Hafnian of interest), their variance may be so high as to prevent an efficient estimation. Here we investigate the performance of two such estimators, which we call the Barvinok and Godsil-Gutman estimators. We find that in general both estimators perform well for adjacency matrices of random graphs, demonstrating a slow growth of variance with the size of the problem. Nonetheless, there are simple examples where both estimators show high variance, requiring an exponential number of samples. In addition, we calculate the asymptotic behavior of the variance for the complete graph. Finally, we simulate the Gaussian Boson Sampling using the Godsil-Gutman estimator and show that this technique can successfully reproduce low-order correlation functions.

\end{abstract}

\title{On randomized estimators of the Hafnian of a nonnegative matrix}


\author{Alexey Uvarov}%
 \email{alexejuv@gmail.com}
\author{Dmitry Vinichenko}
\affiliation{Skolkovo Institute of Science and Technology, Bolshoy Boulevard, 30, p.1, Moscow 121205, Russia}

\date{\today}

\maketitle

\section{Introduction}

Counting the number of perfect matchings in a graph is a well-known computationally hard problem. The function that takes an adjacency matrix and counts the perfect matchings (possibly with weights), known as the Hafnian, appears in consideration of a model of quantum computation called Gaussian Boson Sampling (GBS)~\cite{hamilton_gaussian_2017,kruse_detailed_2019}. In this model, a particular optical setup is used to produce samples from a probability distribution which is believed to be hard to reproduce classically. The hardness stems from the fact that calculating the probability of a particular measurement outcome requires the calculation of the abovementioned Hafnian of a matrix that depends on the outcome.


Recently, a series of GBS experiments~\cite{zhong_quantum_2020,zhong_phase-programmable_2021,deng_solving_2023,deng_gaussian_2023} demonstrated improved quantum sampling capabilities, however, it was later shown that these devices can be simulated classically~\cite{oh_classical_2023-1}. Still, even when the technology does reach the quantum advantage milestone, there will be a long way from an efficient non-classical sampler to a device that would assist in solving practical problems.

A proposed route to applications consists in using the sampler as an approximate solver for the densest subgraph problem and related problems~\cite{arrazola_using_2018,arrazola_quantum_2018,bromley_applications_2020,banchi_molecular_2020}. The kernel matrix of the Gaussian state is taken to be a rescaled adjacency matrix of the problem graph. That way, the detection events correspond to vertex subsets and their induced subgraphs. The probability of observing a particular event is proportional to the Hafnian of the induced subgraph, which is correlated with its density. Thus, the most likely samples are approximate solutions of the densest subgraph problem, or at least good starting points for a classical search algorithm. 
Crucially, in all graph-related applications to date~\cite{arrazola_using_2018,banchi_molecular_2020, anteneh_sample_2023,schuld_measuring_2020}, the kernel matrix is nonnegative, meaning that all its elements are greater or equal than zero. 

This special case has recently been considered as potentially accessible for classical simulation. In Ref.~\cite{quesada_exact_2020}, Quesada and Arrazola proposed a method for classical simulation of general GBS which relies on the calculation of conditional probabilities. For nonnegative kernels, the authors propose a classical simulation algorithm that does not necessarily use exponential time. The key element of this algorithm is a probabilistic method of estimating the Hafnian of a nonnegative matrix due to Barvinok~\cite{barvinok_polynomial_1999}.

The Barvinok estimator takes the input matrix and returns a random variable with mean $\mu$ and variance $\sigma^2$, both of which depend on the input. The mean is guaranteed to be equal to the Hafnian, but to achieve good relative accuracy, we need to run the algorithm multiple times and take the average of the outputs. The size of the sample has to scale as $\sigma^2 / \mu^2$, therefore the analysis of the scaling of the relative variance $\sigma / \mu$ is crucial in understanding the performance of the algorithm.

In this work, we investigate the behavior of the Barvinok estimator and a related method known as the Godsil-Gutman estimator~\cite{godsil1978matching}. Our main findings are the following:
\begin{enumerate}
    \item We provide an analytical formula for $\sigma / \mu$ as a sum over certain graph coverings. We show that the variance of the Godsil-Gutman estimator is always smaller or equal to that of the Barvinok estimator. We also show that, for certain classes of matrices, the relative variance $\sigma / \mu$ scales exponentially with the size of the problem. In addition, we provide a new asymptotic estimate of $\sigma$ for the case when the input matrix is the adjacency matrix of the complete graph.
    \item We show numerically that for adjacency matrices of random graphs sampled from the Erd\"os-R\'enyi ensemble, both estimators demonstrate a modest growth of variance across the whole range of graph densities. Specifically, for the complete graph, $\sigma / \mu$ grows as a square root of the graph size, and for all other densities, the average $\sigma / \mu$ is smaller than that.
\end{enumerate}

The paper is structured as follows. In Section~\ref{sec:estimator_variances}, we derive an expression for the variance of the Barvinok and Godsil-Gutman estimators in terms of so-called perfect 2-matchings. We also provide examples of graphs that do not admit a polynomial approximation by these estimators without additional effort (Sections \ref{sec:union_edges} and~\ref{sec:union_cycles}). We present our findings on the variance for the complete graph in Section~\ref{sec:complete_graphs}.
In Section~\ref{sec:numerics_estimators}, we show the results of numerical investigations for random graphs. Finally, in Section~\ref{sec:numerics_gbs}, we report on a classical simulation of Gaussian Boson Sampling with nonnegative kernels. Section~\ref{sec:conclude} contains concluding remarks.


\section{Randomized estimators of the Hafnian}
\label{sec:estimator_variances}

The Hafnian of a symmetric, even-dimensional matrix $A \in \mathbb{C}^{2m \times 2m}$ is defined as follows:
\begin{equation}
    \haf A = \sum_{\pi \in PM (2m)} \prod_{\{i, j\} \in \pi} a_{ij}.
    \label{eq:haf_def}
\end{equation}
Here $\pi \in PM(2m)$ means that the summation goes over perfect matchings of $2m$ objects. A perfect matching is simply a way to split $2m$ objects into pairs. For example, if $m=2$, there are three perfect matchings: $\{\{1, 2\}, \{3, 4\}\}$, $\{\{1, 3\}, \{2, 4\}\}$, $\{\{1, 4\}, \{2, 3\}\}$.

We will often treat $A$ as an adjacency matrix of a weighted graph with vertex set $V = \{1, ..., 2m\}$ and edge set $E \subseteq \{\{i,j\}|i, j \in V,  i < j\}$. 
For graphs, perfect matchings are all possible ways to split all vertices into pairs so that each pair is connected by an edge; if there is no edge between vertices $i$ and $j$, then $a_{ij}=0$, and the matchings including the pair $\{i, j\}$ do not contribute to the sum in~\eqref{eq:haf_def}. A schematic depiction is shown in~Fig.~\ref{fig:hafnian_example}.

\begin{figure}
    \centering
    \includegraphics[width=\linewidth]{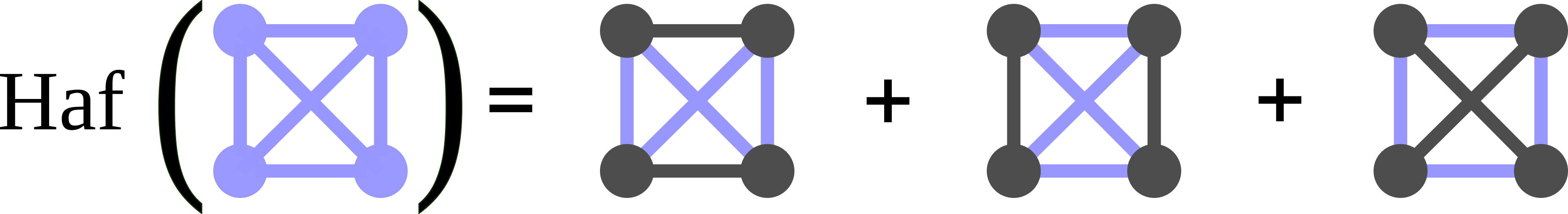}
    \caption{Perfect matchings counted by the Hafnian function.}
    \label{fig:hafnian_example}
\end{figure}




The Hafnian of a nonnegative matrix admits the following randomized approximation algorithm. Let $W$ be a skew-symmetric matrix whose above-diagonal entries are i.i.d.~random variables such that $\mathbb{E} w_{ij} = 0$ and $\mathbb{E} w_{ij}^2 = 1$. 
Define $G$ to be a matrix such that its entries are equal to $g_{ij} = w_{ij} \sqrt{a_{ij}}$. Then one can show that 
\begin{equation}
\label{eq:expected_det_is_haf}
    \mu \equiv \mathbb{E} \det G = \haf A,
\end{equation}
where the expectation is taken over the random matrix entries $w_{ij}$.
This is proven by expanding the determinant by definition and observing that the only terms which do not vanish upon taking the expected value are those where $g_{ij}$ only appear in the second power. Such terms precisely correspond to perfect matchings of the graph vertices.

To estimate $\haf A$, we then simply repeat the sampling $N$ times, calculate the determinants, and take the sample mean. When $w_{ij}$ are sampled from the discrete uniform distribution on $\{-1, 1\}$, we call this protocol the Godsil-Gutman estimator~\cite{godsil1978matching}, when $w_{ij}$ are sampled from the standard normal distribution, we call it the Barvinok estimator~\cite{barvinok_polynomial_1999}. An important difference occurs in the fourth moment: for the standard normal distribution, $\mathbb{E} w_{ij}^4 = 3$, while for the discrete uniform distribution, $\mathbb{E} w_{ij}^4 = 1$.

The accuracy of the estimator depends on the variance $\sigma^2 = \operatorname{Var} \det G$. If we want to achieve multiplicative error $\epsilon$, then the standard error of the mean $\sigma / \sqrt{N}$ has to be of the order $\mu \epsilon$. In particular, if the relative standard deviation $\sigma / \mu$ scales exponentially with the size of the problem, then $N$ will also have to scale exponentially.


In the following we derive an expression for $\mathbb{E} \det G^2$ and for $\left(\mathbb{E} \det G \right)^2$. For this, we will need the notion of a perfect 2-matching, which generalizes the idea of a matching. A perfect 2-matching of a graph is a spanning subgraph (i.e.~it contains all the vertices), such that every connected component is an edge or a cycle. In particular, we will consider 2-matchings that only contain cycles of even length. In simpler terms, a perfect matching is a way to cover all vertices with edges, touching every vertex exactly once, while a perfect 2-matching is a way to cover all vertices with edges and cycles, touching every vertex exactly once. Examples of perfect 2-matchings are shown in~Fig.~\ref{fig:2-matchings}.

\begin{figure}
    \centering
    \includegraphics[width=\linewidth]{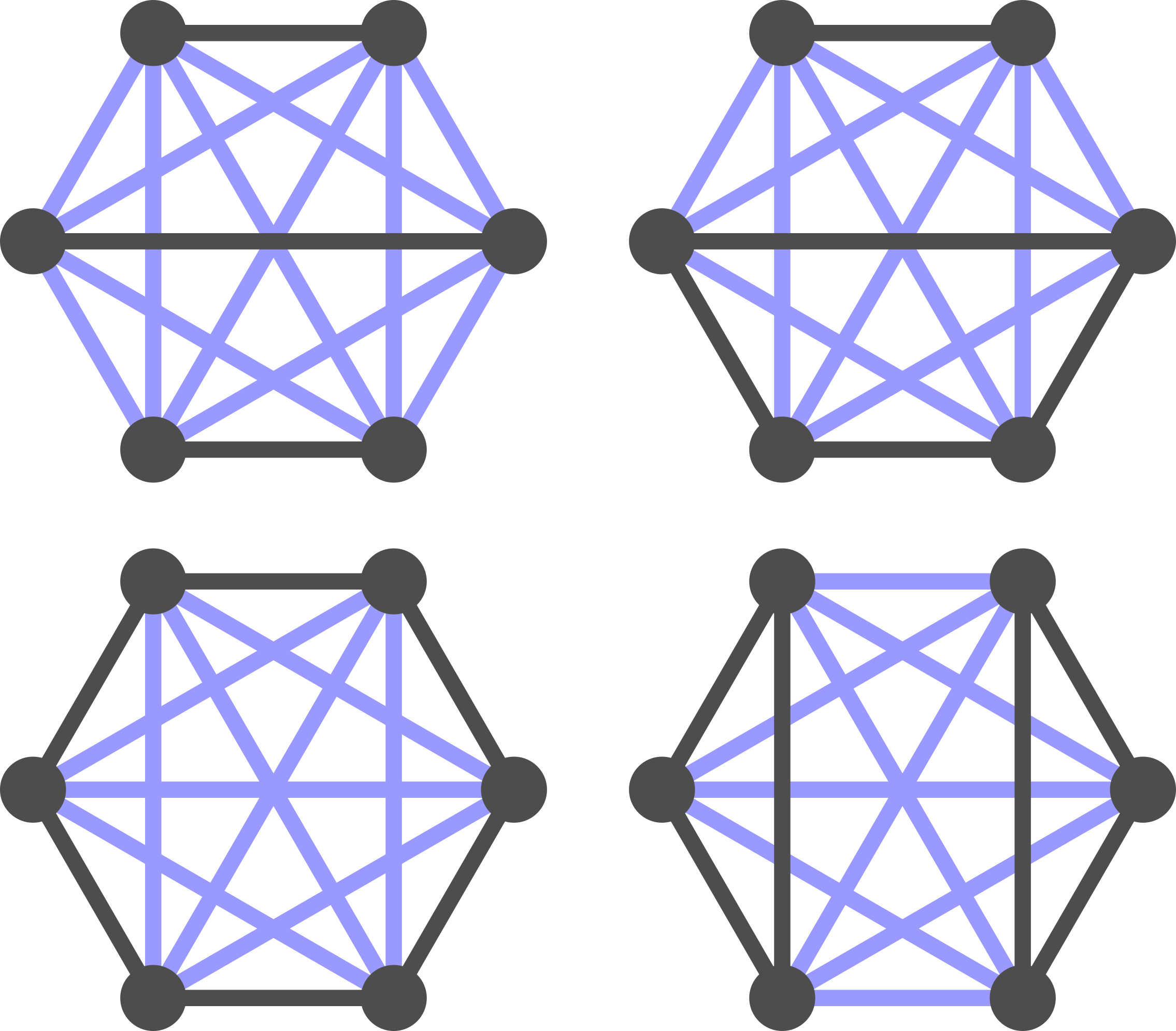}
    \caption{Examples of perfect 2-matchings. The one on the bottom right contains cycles of odd length; such 2-matchings do not contribute to the sum~\eqref{eq:expected_det2}.}
    \label{fig:2-matchings}
\end{figure}

In the following we derive an expression for $\mathbb{E} \det G^2$ and for $\left(\mathbb{E} \det G \right)^2$ in terms of perfect 2-matchings with even cycles. We will say that a permutation $\pi \in S_{2m}$ matches vertices $i$ and $j$ if $\pi(i) = j$ and $\pi(j) = i$.

\begin{prop}
\label{prop:second_moment}
Let $\mathbb{E} w_{ij}^3 = 0, \mathbb{E} w_{ij}^4 = \eta$. Then 
    \begin{equation}
        \label{eq:expected_det2}
        \mathbb{E} (\det G)^2 = \sum_d \eta^{|\mathrm{match}(d)|}6^{|\mathrm{cycle}(d)|}
        \prod_{\substack{\{i,j\} \in \mathrm{match}(d) \\ \{k,l\} \in d \setminus \mathrm{match}(d)}} a_{ij}^2 a_{kl}.
    \end{equation}
    Here the sum is taken over all perfect 2-macthings $d$ that contain cycles of even length; $\mathrm{match}(d)$ is the set of  isolated edges in $d$; $\mathrm{cycle}(d)$ is the set of even-length cycles in $d$.
\end{prop}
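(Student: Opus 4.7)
The plan is to exploit the identity $\det G = (\operatorname{Pf} G)^2$ for skew-symmetric $G$, so that the second moment of the determinant becomes the fourth moment of the Pfaffian. Writing $\operatorname{Pf} G = \sum_\pi \operatorname{sgn}(\pi) \prod_{\{i,j\}\in\pi,\, i<j} w_{ij}\sqrt{a_{ij}}$, I would expand $(\operatorname{Pf} G)^4$ as a sum over quadruples $(\pi_1,\pi_2,\pi_3,\pi_4)$ of perfect matchings of $\{1,\dots,2m\}$. Independence of the $w_{ij}$ together with $\mathbb{E}w_{ij}=\mathbb{E}w_{ij}^3=0$ kills every quadruple in which some edge appears with odd total multiplicity in the multiset union $\pi_1\sqcup\pi_2\sqcup\pi_3\sqcup\pi_4$.

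I would then characterise the admissible support graphs. Each vertex sits in exactly one edge of each $\pi_k$, so the total incidence at every vertex is $4$; combined with the even-multiplicity constraint this forces each vertex to belong either to a unique edge of multiplicity $4$ or to exactly two distinct edges of multiplicity $2$. The underlying simple graph is therefore a vertex-disjoint union of isolated edges (the multiplicity-$4$ edges) and cycles (built from multiplicity-$2$ edges), which is precisely a perfect $2$-matching $d$. Since each $\pi_k$ uses only support edges, its restriction to a cycle component $C$ is a perfect matching of $V(C)$ built out of the cycle edges of $C$; such a matching exists iff $|C|$ is even, in which case there are exactly two alternating choices $M_+,M_-$. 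Demanding every edge of $C$ to have multiplicity $2$ then forces exactly two of the four $\pi_k$ to restrict to $M_+$ and the other two to $M_-$, giving $\binom{4}{2}=6$ configurations per even cycle and a unique configuration per isolated edge; odd cycles contribute nothing.

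To finish I would collect the weight of each admissible quadruple. The $w$'s being independent across edges, the expectation factorises: a multiplicity-$4$ edge $\{i,j\}$ contributes $\mathbb{E}w_{ij}^4 \cdot a_{ij}^{4/2} = \eta\,a_{ij}^2$, while a multiplicity-$2$ edge $\{k,l\}$ contributes $\mathbb{E}w_{kl}^2 \cdot a_{kl} = a_{kl}$. The sign $\prod_k \operatorname{sgn}(\pi_k)$ decomposes over the components of the support up to an interleaving factor that depends only on the vertex partition and not on which matching is chosen on each component; this factor is raised to the fourth power and therefore cancels. On each component the remaining contribution is $\operatorname{sgn}(M_+)^2\operatorname{sgn}(M_-)^2=1$ for an even cycle and $+1$ for an isolated edge. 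Summing $\eta^{|\mathrm{match}(d)|}\,6^{|\mathrm{cycle}(d)|}$ times the product of $a$-weights over all valid $2$-matchings $d$ then reproduces~\eqref{eq:expected_det2}.

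The step I expect to be most delicate is the sign bookkeeping: verifying that the per-Pfaffian interleaving factor between disjoint vertex sets drops out once raised to the fourth power, and that on every even cycle the split of four matchings into two copies each of $M_\pm$ yields a positive contribution regardless of the individual signs of $M_+$ and $M_-$. With these points secured, the combinatorial count together with the edge-weight factorisation immediately gives the stated formula.
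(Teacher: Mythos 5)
Your argument is correct, but it takes a genuinely different route from the paper. The paper never invokes the Pfaffian: it expands $(\det G)^2$ as a double sum over pairs of permutations $(\pi,\pi')\in S_{2m}\times S_{2m}$, shows that a surviving pair must agree orbit-by-orbit (each orbit being either a matched pair or a shared cycle), and obtains the factor $6$ per even cycle as $2$ (the two ways to assign the alternating matchings to $\pi$ and $\pi'$) plus $4$ (each permutation independently traversing the cycle in either direction); odd cycles are then killed by an explicit cancellation, since reversing an odd cycle flips the sign of the term via $g_{ji}=-g_{ij}$. Your route via $\mathbb{E}(\det G)^2=\mathbb{E}(\operatorname{Pf}G)^4$ and quadruples of perfect matchings buys two simplifications: odd cycles are excluded for the structural reason that an odd cycle admits no perfect matching from its own edges, so no sign cancellation is needed; and every variable appears in the fixed orientation $i<j$, so skew-symmetry never introduces signs inside the expectation and your $6$ is simply the multinomial count $\binom{4}{2}$ of which two of the four matchings take each alternating matching on a given cycle. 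The price is exactly the lemma you flagged: that $\epsilon(\pi)$ factorizes over the blocks of the support up to a block-only interleaving sign. This is true and provable, e.g.\ via the characterization $\epsilon(\pi)=(-1)^{\mathrm{cr}(\pi)}$ with $\mathrm{cr}$ the number of arc crossings on a line: the cross-block crossing count between $M_\alpha$ and $M_\beta$ is, mod $2$, $\sum_{\{i,j\}\in M_\alpha}\abs{V_\beta\cap(i,j)}$, which is independent of $M_\beta$, hence by symmetry independent of both matchings and a function of the vertex blocks alone; raised to the fourth power it cancels, and $\epsilon_{V_\alpha}(M_+)^2\epsilon_{V_\alpha}(M_-)^2=1$ on each cycle. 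With that lemma written out, your proof is complete and reproduces the paper's formula (one can check it against the paper's $m=2$ example, where the three cross terms each acquire the coefficient $\binom{4}{2}=6$).
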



\begin{proof}
    The squared determinant is the product of two sums over permutations:
    \begin{multline}
          (\det G)^2 = \left( \sum_{\pi \in S_{2m}} \operatorname{sgn} \pi \prod_{i=1}^{2m} g_{i \pi(i)}\right) \times \\
        \times \left( \sum_{\pi' \in S_{2m}} \operatorname{sgn} \pi' \prod_{i'=1}^{2m} g_{i' \pi'(i')}\right).   
        \label{eq:det2}
    \end{multline}
    The terms that do not vanish under taking the expected value are monomials in $g_{ij}$ in which all terms appear either in the second or in the fourth power. An element $g_{ij}$ appears four times if both permutations match vertices $i$ and $j$ (e.g.~Fig.~\ref{fig:2-matchings}, top left). In this situation, taking the expected value over this particular multiple yields a factor $\eta a_{ij}^2$ (the minus signs coming from the skew-symmetry of $G$ are eliminated here). A variable $g_{ij}$ can appear in the second power in the following cases:
    \begin{enumerate}
        \item Exactly one of the permutations matches vertices $i$ and $j$, and the other permutation matches vertices $i$ and $j$ to some other vertices (e.g.~Fig.~\ref{fig:2-matchings}, right);
        \item The vertex $i$ has the same orbit of size $\geq 3$ under the actions of both $\pi$ and $\pi'$, and either $\pi(i) = j$ or $\pi(j) = i$ (e.g.~Fig.~\ref{fig:2-matchings}, bottom left);
    \end{enumerate}
    Indeed, suppose that $\pi$ matches $i, j$ but $\pi'$ does not. Then if $\pi'(i) = k$ but $\pi'(k) \neq i$, then $g_{ik}$ necessarily comes in the first power and thus vanishes. If $\pi'(i) = j$ but $\pi'(j) \neq i$, then the total power of $g_{ij}$ is three, which vanishes under taking the expected value. Therefore $\pi'$ also has to match $i$ to some vertex, and the same applies to vertex $j$.
    
    Suppose now that neither permutation matches $i$ with $j$, but $\pi(i) = j$ and $\pi'(i) = j$. Then if $\pi(j) = k \neq \pi'(j)$, the term $g_{jk}$ appears in the first power and vanishes under the expected value. Therefore, $\pi^2(i) = (\pi')^2(i)$. Continuing by induction, we find that the whole orbit of $i$ has to be same under both permutations. The same can be shown if $\pi'^{-1}(i) = j$, \textit{mutatis mutandis}.

    Let us now assign to each nonvanishing term a perfect 2-matching $d$, by adding an edge $i, j$ if either of the vertices is mapped to the other by either $\pi$ or $\pi'$. Taken together with the adjacent vertices as a subgraph, this edge subset contains only isolated edges and cycles. If a pair of permutations $\pi, \pi'$ contains a shared cycle of odd length, then we can construct permutations $\tilde{\pi}$ and $\tilde{\pi}'$ by reversing the direction of this cycle. Thus, we will have four pairs of permutations corresponding to the same subgraph. However, reversing the direction of the cycles add a factor of $(-1)$ to its contribution, which means that these terms will cancel each other. Therefore, the only relevant subgraphs consist of matchings and even cycles. Examples of how two permutations can form a perfect 2-matching are shown in~Fig.~\ref{fig:permutation_merge}.

    \begin{figure}
        \centering
        \includegraphics[width=\linewidth]{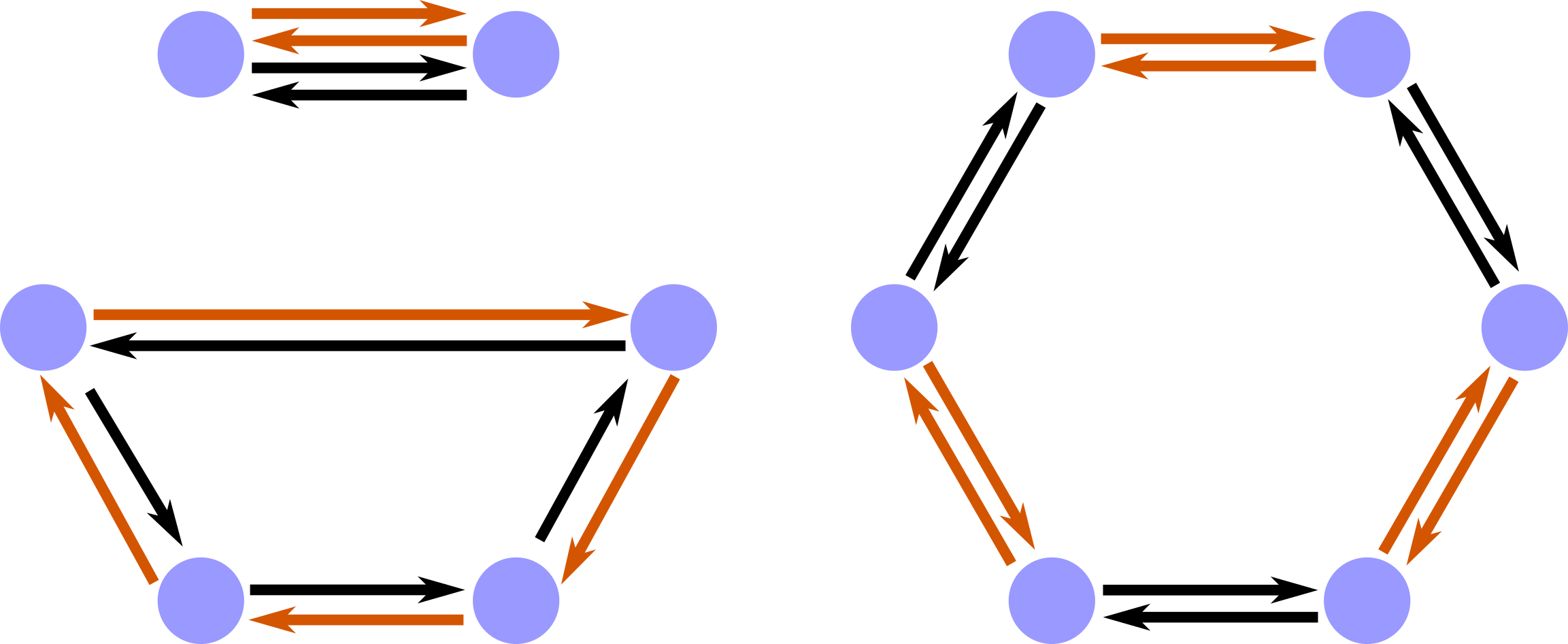}
        \caption{Examples of perfect 2-matchings formed by two permutations (orange arrows and black arrows). Note that if both permutations map two vertices to each other, they become connected by four arrows, meaning that the corresponding term in~\eqref{eq:det2} has a fourth power in $g_{ij}$.}
        \label{fig:permutation_merge}
    \end{figure}

    Suppose a cycle in $d$ traverses vertices $v_1, ..., v_{2k}$. There are only six ways two permutations can act on these vertices to form a cycle in $d$: they can match vertices in an alternating fashion in two different ways, or both permutations can cyclically shift the vertices $(v_1, ..., v_{2k})$ in either direction. Thus, each relevant subset $d$ appears in the sum exactly $6^{|\mathrm{cycle}(d)|}$ times. The cycle type of both permutations is the same, therefore $\operatorname{sgn} \pi = \operatorname{sgn} \pi'$. Thus, collecting the terms and prefactors coming from the expectations of the fourth powers of $g_{ij}$, we arrive to the desired formula.
\end{proof}

The squared Hafnian can also be expanded in similar terms:

\begin{prop}
In the definitions of Proposition \ref{prop:second_moment}, the following is true:
    \begin{equation}
        \label{eq:haf2_is_sum}
        \haf^2 A =  \sum_d 2^{|\mathrm{cycle}(d)|}
        \prod_{\substack{(i,j) \in \mathrm{match}(d) \\ (k,l) \in d \setminus \mathrm{match}(d)}} a_{ij}^2 a_{kl}.
    \end{equation}
\end{prop}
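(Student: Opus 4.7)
The plan is to expand $\haf^2 A$ directly from the definition and group terms according to the spanning subgraph obtained by taking the union of two matchings. Writing
\[
\haf^2 A = \sum_{\pi, \pi' \in PM(2m)} \prod_{\{i,j\} \in \pi} a_{ij} \prod_{\{k,l\} \in \pi'} a_{kl},
\]
I would associate to each ordered pair $(\pi, \pi')$ the multigraph $d = \pi \cup \pi'$ on the vertex set $\{1,\ldots,2m\}$. The key combinatorial fact is standard: the union of two perfect matchings on the same vertex set decomposes as a disjoint union of ``doubled edges'' (edges contained in both $\pi$ and $\pi'$) and cycles along which $\pi$ and $\pi'$ alternate. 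Since any alternating cycle between two matchings must have even length, the multigraph $d$ is automatically a perfect 2-matching with only even cycles, so that the map $(\pi, \pi') \mapsto d$ lands in the index set of the right-hand side.

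The next step is to count the fibers of this map. Given a target $d$ with edge set $\mathrm{match}(d) \sqcup \bigsqcup_{C \in \mathrm{cycle}(d)} E(C)$, every edge of $\mathrm{match}(d)$ must belong to both $\pi$ and $\pi'$ simultaneously, so it contributes no freedom. On each even cycle $C$, the edges of $C$ must be partitioned into two alternating perfect matchings of $V(C)$; such a partition is determined by which of the two alternating classes is assigned to $\pi$, yielding exactly two choices per cycle. Hence the fiber size is $2^{|\mathrm{cycle}(d)|}$.

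Finally I would evaluate the contribution of a single pair $(\pi, \pi')$ in the fiber over $d$: each $\{i,j\} \in \mathrm{match}(d)$ appears in both matchings, giving a factor $a_{ij} \cdot a_{ij} = a_{ij}^2$, while each cycle edge $\{k,l\}$ appears in exactly one of the two matchings, contributing a single factor $a_{kl}$. Summing over fibers and then over $d$ yields
\[
\haf^2 A = \sum_d 2^{|\mathrm{cycle}(d)|} \prod_{\{i,j\} \in \mathrm{match}(d)} a_{ij}^2 \prod_{\{k,l\} \in d \setminus \mathrm{match}(d)} a_{kl},
\]
as required.

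The main obstacle is not any individual calculation but rather making the bookkeeping unambiguous: verifying that the decomposition of $\pi \cup \pi'$ into doubled edges and alternating cycles is well-defined (recall that when $\pi$ and $\pi'$ share an edge we still treat it as a doubled edge rather than as a degenerate 2-cycle), and confirming that the two-choice freedom per cycle is genuine rather than over- or under-counted. Compared with Proposition~\ref{prop:second_moment}, the absence of any factor $\eta$ reflects the fact that no edge ever appears to the fourth power here, and the factor $2^{|\mathrm{cycle}(d)|}$ rather than $6^{|\mathrm{cycle}(d)|}$ reflects that only the two ``alternating'' traversals of a cycle are available, as opposed to the additional four ``rotational'' pairs that arise in the determinant expansion.
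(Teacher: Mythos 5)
Your proposal is correct and follows essentially the same route as the paper's own proof: expand $\haf^2 A$ as a double sum over pairs of perfect matchings, identify each pair with the perfect 2-matching formed by their union (doubled edges plus even alternating cycles), and count the fiber over each such $d$ as $2^{|\mathrm{cycle}(d)|}$. Your version merely spells out the fiber-counting and weight bookkeeping in more detail than the paper does.
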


\begin{proof}
    Let us expand the squared Hafnian by definition:
    \begin{multline}
    \label{eq:haf2}
        \haf^2 A = \left(\sum_{\pi \in PM (2m)} \prod_{(i, j) \in \pi} a_{ij} \right) \times \\
        \times \left(\sum_{\pi' \in PM (2m)} \prod_{(k, l) \in \pi'} a_{kl} \right).
    \end{multline}
    The union of edges of two perfect matchings consists entirely of isolated edges and alternating cycles (the cycles are formed in a similar fashion to Fig.~\ref{fig:permutation_merge}, right). Each alternating cycle can be formed in two ways. Thus, every edge subset $d$ corresponds to $2^{|\mathrm{cycle}(d)|}$ terms in the sum (\ref{eq:haf2}).
\end{proof}

In Appendix~\ref{sec:det2_example} we show an example of the calculation of $\mathbb{E} (\det G)^2$ and $(\haf A)^2 $ for $m=2$.

In the following subsections, we will consider some special classes of matrices where the estimators show interesting behavior.

\subsection{Union of edges}
\label{sec:union_edges}


The first example is the following $2m \times 2m$ matrix:
\begin{equation}
    A = \begin{pmatrix}
    0 & 1  \\
    1 & 0 \\
    &  & 0 & 1 \\
    &  & 1 & 0 \\
    &  &  &  & 0 & 1 \\ 
    &  &  &  & 1 & 0 \\ 
    &  &  &  &  &  & \ddots &  \\ 
    &  &  &  &  &  &  & 0 & 1 \\ 
    &  &  &  &  &  &  & 1 & 0 \\ 
    \end{pmatrix}
\end{equation}
Clearly $\haf A = 1$, since this is the adjacency matrix of a graph which has exactly one perfect matching. The determinant of $G$ is equal to
\begin{equation}
    \det G = g_{12}^2 g_{34}^2 \dots g_{2m-1, 2m}^2.
\end{equation}
The expected square of the determinant is
\begin{equation}
    \mathbb{E} (\det G)^2 = \mathbb{E} g_{12}^4 g_{34}^4 \dots g_{2m-1, 2m}^4 = \eta^m.
\end{equation}
This means that the variance of the Barvinok determinant for $A$ is equal to $3^m - 1$. The variance of the Godsil-Gutman determinant, on the other hand, is equal to zero.

\subsection{Union of cycles}
\label{sec:union_cycles}
The second example is the adjacency matrix of a graph consisting of a disjoint union of $k$ cycles of length four:
\begin{equation}
    \includegraphics[width=\linewidth]{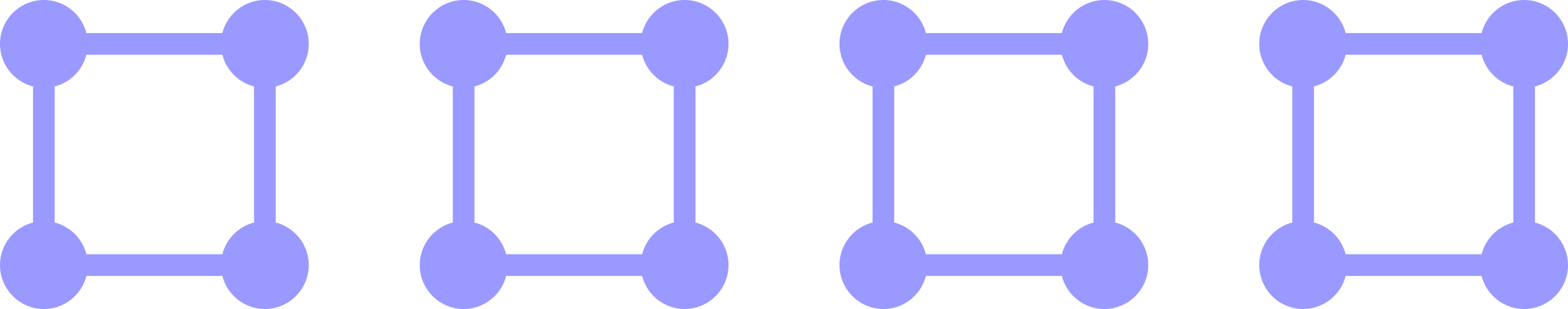} \nonumber
\end{equation}
Using (\ref{eq:expected_det2}), we quickly find that the variance of either estimator grows exponentially faster in $k$ than the squared mean. 

In general, if $\mathcal{G}$ is a connected graph whose adjacency matrix is such that $\sigma / \mu = r > 1$, then the disjoint union of $k$ copies of this graph will yield a relative deviation equal to $r^k$. 

In principle, we could slightly modify the estimation technique by considering each connected component separately, since the Hafnian of the disjoint union is the product of individual Hafnians. However, we can slightly modify the previous example to get a connected graph that whose Hafnian cannot be efficiently calculated by the Godsil-Gutman estimator:
\begin{equation}
    \includegraphics[width=\linewidth]{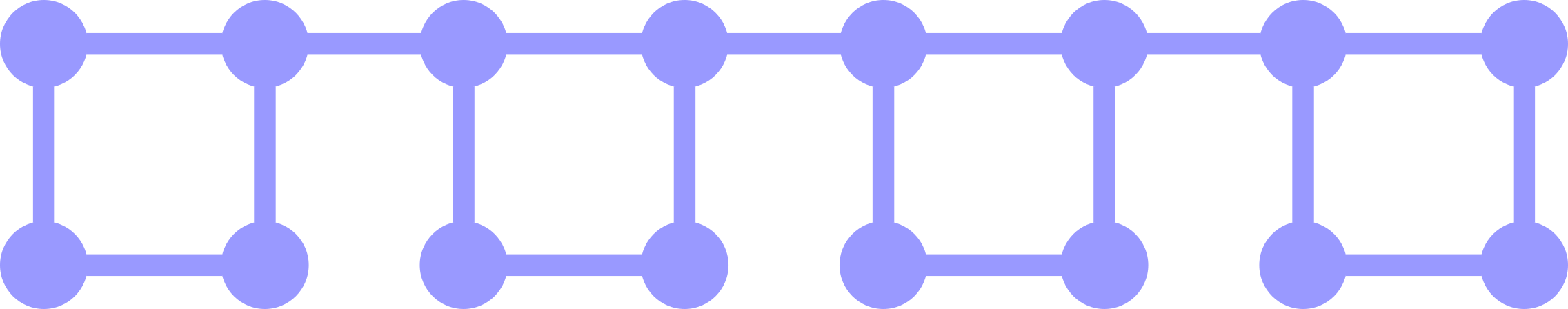} \nonumber
\end{equation}
One can verify that in this example, no new perfect matchings or perfect 2-matchings are created by adding the new edges. We could slightly modify the procedure even further by first detecting and cutting the bridges (i.e.~graph edges such that their removal makes the graph disconnected), and then considering each component separately. Nonetheless, while it is easy to adjust the algorithm to account for these simple examples, there is yet no way to guarantee that no other pathological examples will appear. 



The complex variants of the Hafnian estimator seem to suffer from all the same problems as their real counterparts. We can construct such estimators by taking $W$ to be a skew-Hermitian matrix and $g_{ij}$ to be complex random variables such that $\mathbb{E} w_{ij} = 0$ and $\mathbb{E} |w_{ij}|^2 = 1$. This can be e.g.~a Gaussian random variable or a discrete distribution over the roots of unity, as used in Ref.~\cite{karmarkar_monte-carlo_1993} for estimating the permanent. The same example as above will produce an estimate with an exponentially high relative deviation.

Finally, we note that for a random real variable with zero mean and unit variance, the fourth moment cannot be less that one. Therefore, one cannot improve on the Godsil-Gutman estimator in terms of variance simply by changing the distribution of $w_{ij}$.

\subsection{Complete graphs}
\label{sec:complete_graphs}

One more class of matrices worth discussing is the class of matrices corresponding to complete graphs. It was previously known that the relative deviation $\sigma / \mu$ of the Godsil-Gutman determinant is polynomial in size for complete graphs~\cite{lovasz_matching_2009}. Here we go further and show that for a complete graph on $2m$ vertices, $\sigma / \mu$ scales as $\sqrt{m}$. 
To establish this, we use the theory of combinatorial species~\cite{joyal_theorie_1981, bergeron_combinatorial_1997}. This theory lets us construct a function whose $m$'th Taylor coefficient contains information about the expected value $\mathbb{E} \det G^2$ for the adjacency matrix of a complete graph with $2m$ vertices. Such functions encoding numerical sequences in their Taylor coefficients are known as generating functions and are widely used in combinatorics~\cite{wilf2005generatingfunctionology}. Then we use the so-called Darboux's method~\cite{wilf2005generatingfunctionology,flajolet2009analytic} to calculate the asymptotic behavior of the Taylor coefficients. Here we will only state the final result, leaving the proof to Appendix~\ref{sec:species}.

\begin{thm}
\label{thm:complete_graph}
    Let $A$ be the adjacency matrix of a complete graph with $2m$ vertices. Then we have the following as~$m \rightarrow \infty$:
    \begin{equation}
       \frac{\mathbb{E} \det G^2}{(\haf A)^2}  = \sqrt{\pi} m e^{\frac{\eta - 3}{2}} + O(1).     
    \end{equation}
\end{thm}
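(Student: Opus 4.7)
Plan: encode the numerator $\mathbb{E}(\det G)^2$ and denominator $(\haf A)^2$ for the complete graph $K_{2m}$ as Taylor coefficients of two closed-form exponential generating functions, then extract asymptotics via Darboux's method. The first step is to specialize Proposition~\ref{prop:second_moment} and its companion to $a_{ij}=1$, which collapses both weighted sums into pure counts of labeled perfect 2-matchings of $\{1,\dots,2m\}$ whose cycles are all even, weighted by $\eta^{|\mathrm{match}(d)|}6^{|\mathrm{cycle}(d)|}$ in the numerator and $2^{|\mathrm{cycle}(d)|}$ in the denominator.

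Next I would invoke the exponential formula. Such a 2-matching is a set partition whose blocks are each either a 2-element edge or an even cycle of length $\geq 4$, so the weighted EGF factorizes as the exponential of the EGF of a single component. Using the standard counts (a single edge contributes $x^2/2$, and a single even cycle contributes $\sum_{k\geq 2} x^{2k}/(4k) = -\tfrac14\log(1-x^2) - x^2/4$ since there are $(2k-1)!/2$ cyclic orderings on $2k$ labeled vertices), inserting the weights and exponentiating gives
\begin{equation}
B(x) := \sum_{m} \mathbb{E}(\det G)^2_{K_{2m}} \frac{x^{2m}}{(2m)!} = \frac{e^{(\eta-3)x^2/2}}{(1-x^2)^{3/2}},
\end{equation}
and similarly $H(x) := \sum_m (\haf K_{2m})^2 x^{2m}/(2m)! = (1-x^2)^{-1/2}$; the latter can be cross-checked against the closed form $(\haf K_{2m})^2 = ((2m-1)!!)^2$.

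The final step is singularity analysis. Substituting $y = x^2$ reduces the problem to comparing $[y^m]$ of $e^{(\eta-3)y/2}(1-y)^{-3/2}$ with $[y^m]$ of $(1-y)^{-1/2}$. Both are analytic on $|y|<1$ with a unique algebraic singularity at $y=1$, where the entire prefactor evaluates to $e^{(\eta-3)/2}$. The standard transfer theorem $[y^m](1-y)^{-\alpha} \sim m^{\alpha-1}/\Gamma(\alpha)$ then yields the leading-order asymptotics of numerator and denominator; their ratio comes out as a polynomial-in-$m$ factor times $e^{(\eta-3)/2}$, with the subleading corrections absorbed into the $O(1)$ remainder.

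The main technical hurdle is the justification of the transfer step — verifying analytic continuation into a $\Delta$-domain at $y=1$ and controlling the subdominant error once normalized by the slower-growing denominator. Because the only singularity is the explicit algebraic one and the prefactor $e^{(\eta-3)y/2}$ is entire, this amounts to routine Flajolet-Odlyzko bookkeeping. The more error-prone part is the combinatorial accounting in the EGF setup: keeping the weight $6$ per even cycle (versus $2$ for $H$) straight, and making sure odd cycles are excluded in both formulas, are the places I would double-check carefully, e.g.~against the direct expansion $\mathbb{E}(\det G)^2_{K_4} = 3\eta^2 + 18$ and $(\haf K_4)^2 = 9$.
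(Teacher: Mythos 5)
Your approach coincides with the paper's: what you call the exponential formula is exactly the species construction $\mathbf{Exp}\circ\mathbf{Pair}_t$ and $\mathbf{Exp}\circ 6\mathbf{Cycle}$ used in Appendix~B, your component generating functions ($\eta x^2/2$ per edge, $6\bigl(-\tfrac14\log(1-x^2)-\tfrac{x^2}{4}\bigr)$ per even cycle) reproduce the paper's $f(x,\eta)=e^{(\eta-3)x^2/2}(1-x^2)^{-3/2}$, and your Flajolet--Odlyzko transfer plays the role of the paper's Darboux expansion around $y=1$. Your one genuinely different (and tidier) move is to encode the denominator as $(1-x^2)^{-1/2}$ rather than invoking $((2m-1)!!)^2$ directly; both checks you propose ($\mathbb{E}(\det G)^2_{K_4}=3\eta^2+18$, $(\haf K_4)^2=9$) are correct.

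The gap is that you stop at ``a polynomial-in-$m$ factor times $e^{(\eta-3)/2}$'' without evaluating that factor, and the factor is the entire content of the theorem. Carrying your own step through with $\alpha=(\eta-3)/2$ gives
\[
\frac{[y^m]\,e^{\alpha y}(1-y)^{-3/2}}{[y^m]\,(1-y)^{-1/2}}
= e^{\alpha}\,m\,\frac{\Gamma(1/2)}{\Gamma(3/2)} + O(1)
= 2m\,e^{(\eta-3)/2} + O(1),
\]
which is \emph{not} the stated $\sqrt{\pi}\,m\,e^{(\eta-3)/2}+O(1)$. This discrepancy is not an artifact of your route: the paper's own penultimate expression multiplies $(2m)!/((2m-1)!!)^2\sim\sqrt{\pi m}$ by $\tfrac{2}{\sqrt{\pi}}e^{\alpha}\sqrt{m}$, which equals $2me^{\alpha}$, so the final equality in Eq.~(\ref{eq:asymptote}) replacing the constant $2$ by $\sqrt{\pi}$ is an arithmetic slip. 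A decisive sanity check your framework makes immediate: at $\eta=3$ the ratio of coefficients is exactly $\binom{m+1/2}{m}\big/\binom{m-1/2}{m}=2m+1$, which is $2m+O(1)$ but cannot be $\sqrt{\pi}\,m+O(1)$. So finish the constant --- your method is sound, and completing it shows the theorem should read $2m\,e^{(\eta-3)/2}+O(1)$ (the qualitative conclusion, $\sigma/\mu\sim\sqrt{m}$ up to a constant, survives unchanged).
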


\section{Relative error for random graphs}
\label{sec:numerics_estimators}

In this section, we investigate the behavior of Godsil-Gutman and Barvinok estimators for adjacency matrices of random graphs (that is, $a_{ij} \in \{0, 1\}$). 
We generate random graphs from the the Erd\"os-R\'enyi ensemble $G_{2m, p}$. That is, we take $2m$ vertices and randomly connect them, adding each edge independently with probability $p$.
For each graph, we take $N = 500 \cdot m$ samples of the estimator and compute their sample mean and sample variance. We also compute the exact Hafnian of the graph. Using this data, we evaluate the relative standard deviation $\sigma / \mu$. Here, the choice of sample size $N$ is informed by the results of the previous section, implying that under this scaling of sample size, the complete graph should demonstrate an asymptotically constant relative error. Finally, we estimate the accuracy of determining $\sigma / \mu$ by bootstrap resampling ($n = 100$).

Fig.~\ref{fig:relsigma} shows the behavior of $\sigma / \mu$ for both estimators. The average $\sigma / \mu$ for the Godsil-Gutman estimator (Fig.~\ref{fig:relsigma}a) exhibits a monotonous growth with $p$ for all values of $m$ considered. Interestingly, the error first grows with $p$ quite sharply, then flattens out and becomes relatively constant. The width of the growth period and its location both change with the increase of system size, shrinking and moving towards smaller values of $p$. The average error for the Barvinok estimator (Fig.~\ref{fig:relsigma}b) demonstrates a substantially more complicated behavior. Instead of being monotonous with the density, the error shows a distinct peak. The location of the peak appears to match the location of the sharp growth region for the Godsil-Gutman estimator. 

\begin{figure*}
    \centering
    \subfloat[]{
        \includegraphics[width=0.5\textwidth]{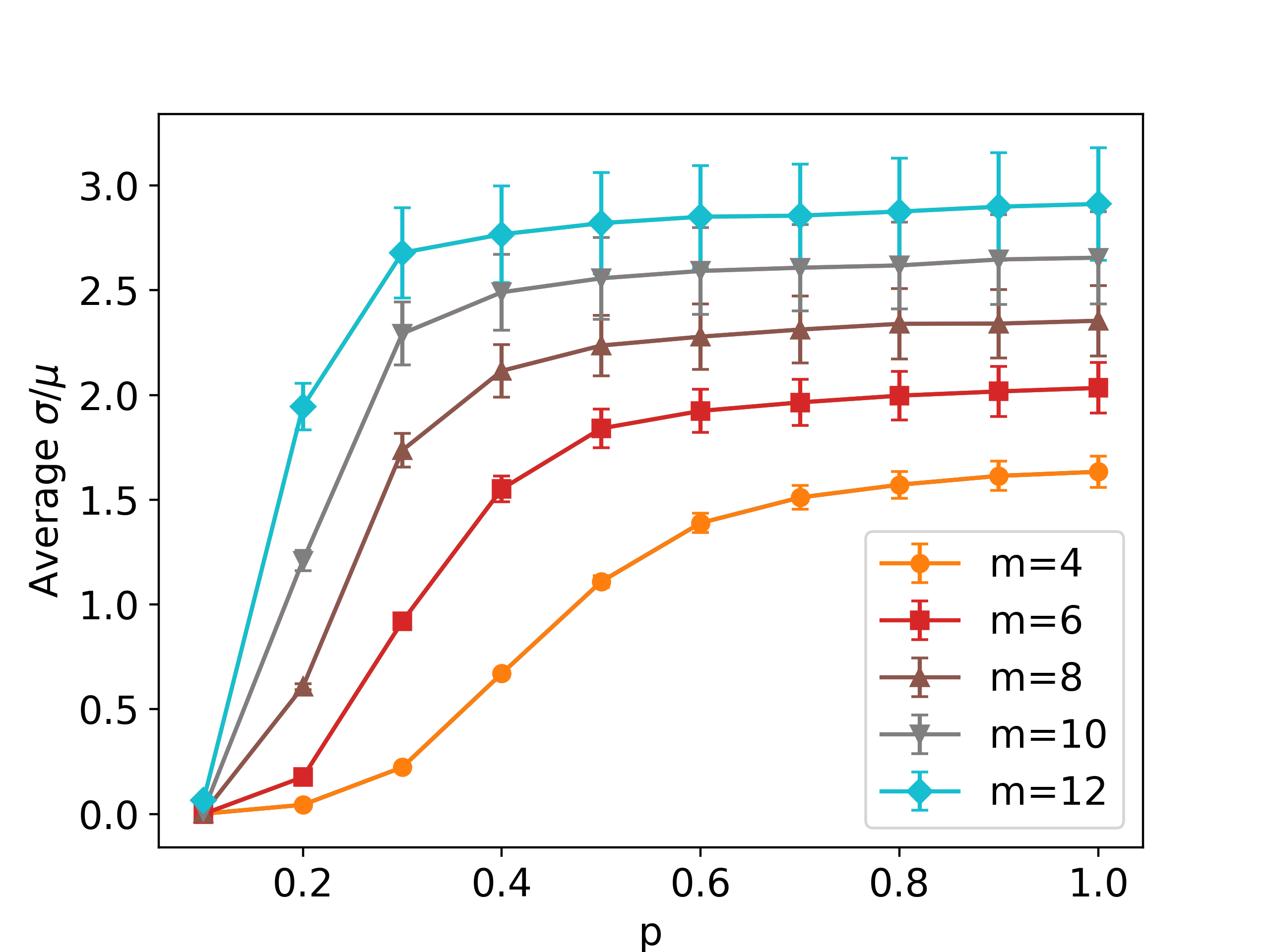}
    }
    \subfloat[]{
        \includegraphics[width=0.5\textwidth]{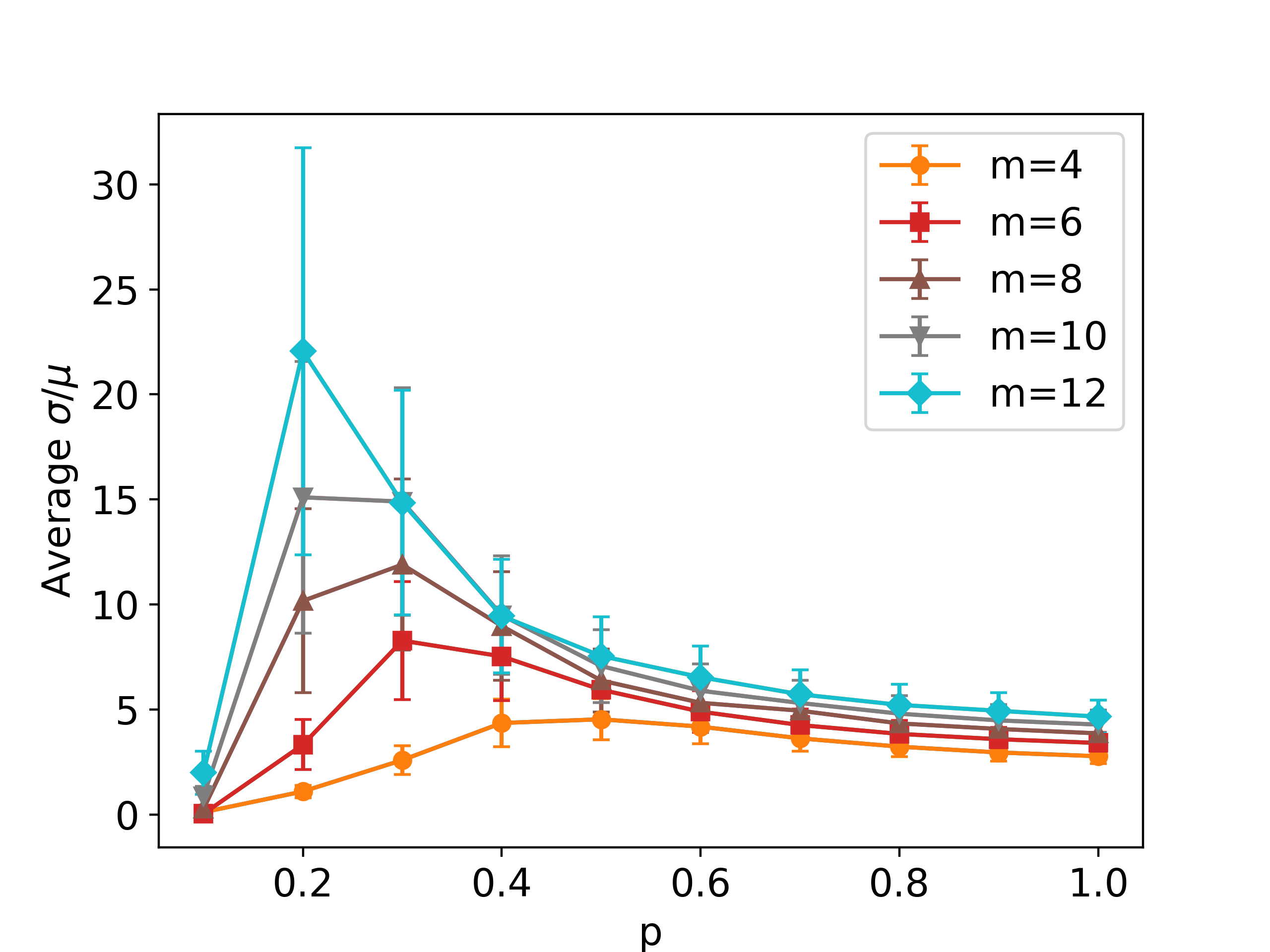}
    }
    \caption{Average relative standard deviation $\sigma / \mu$ of (a) Godsil-Gutman and (b) Barvinok estimators for random graphs. Error bars denote confidence interval of one sigma, evaluated by bootstrap resampling and averaged over the sampled graphs.}
    \label{fig:relsigma}
\end{figure*}

\begin{figure*}
    \centering
    \subfloat[]{
        \includegraphics[width=0.5\textwidth]{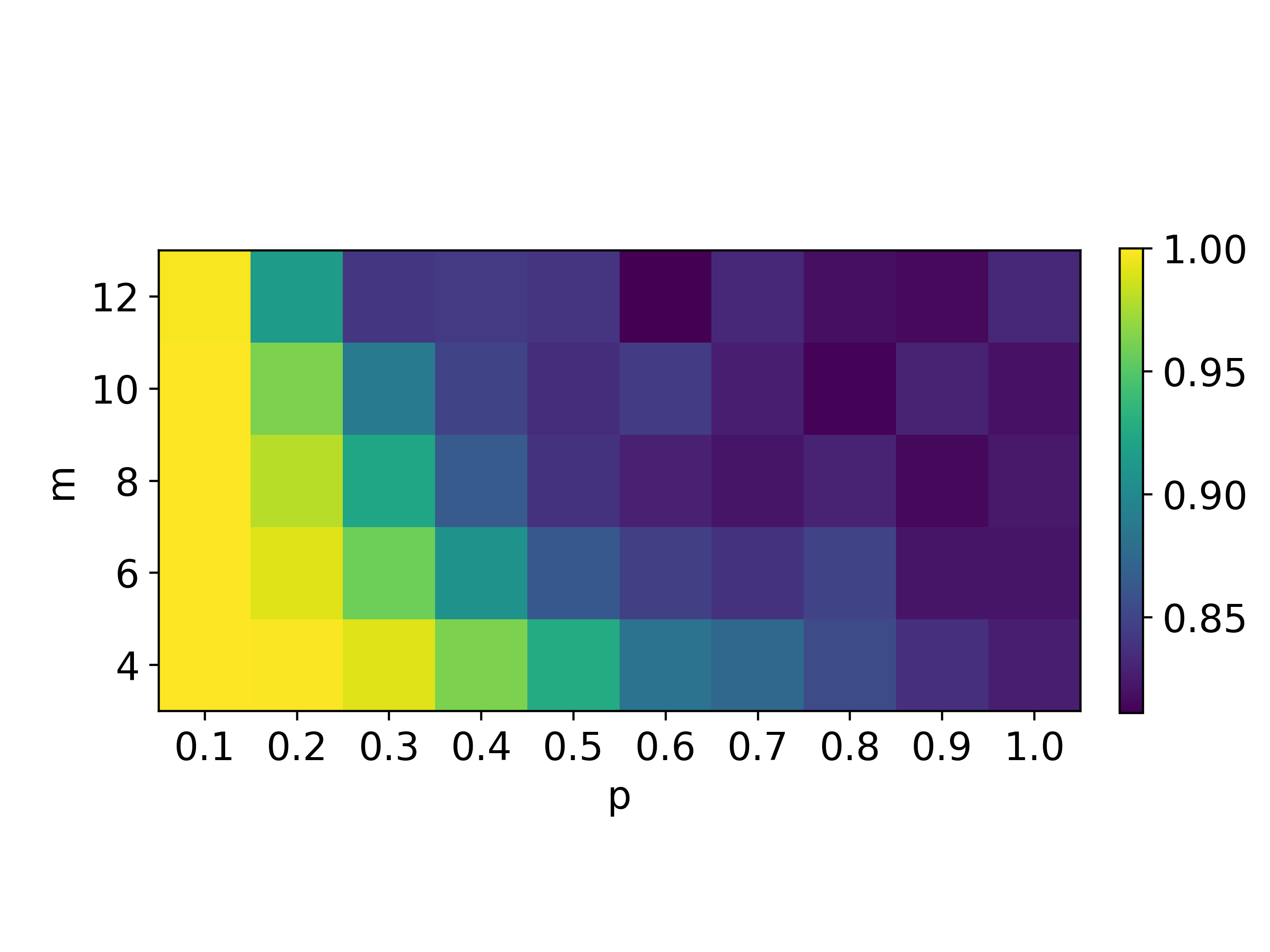}
    }
    \subfloat[]{
        \includegraphics[width=0.5\textwidth]{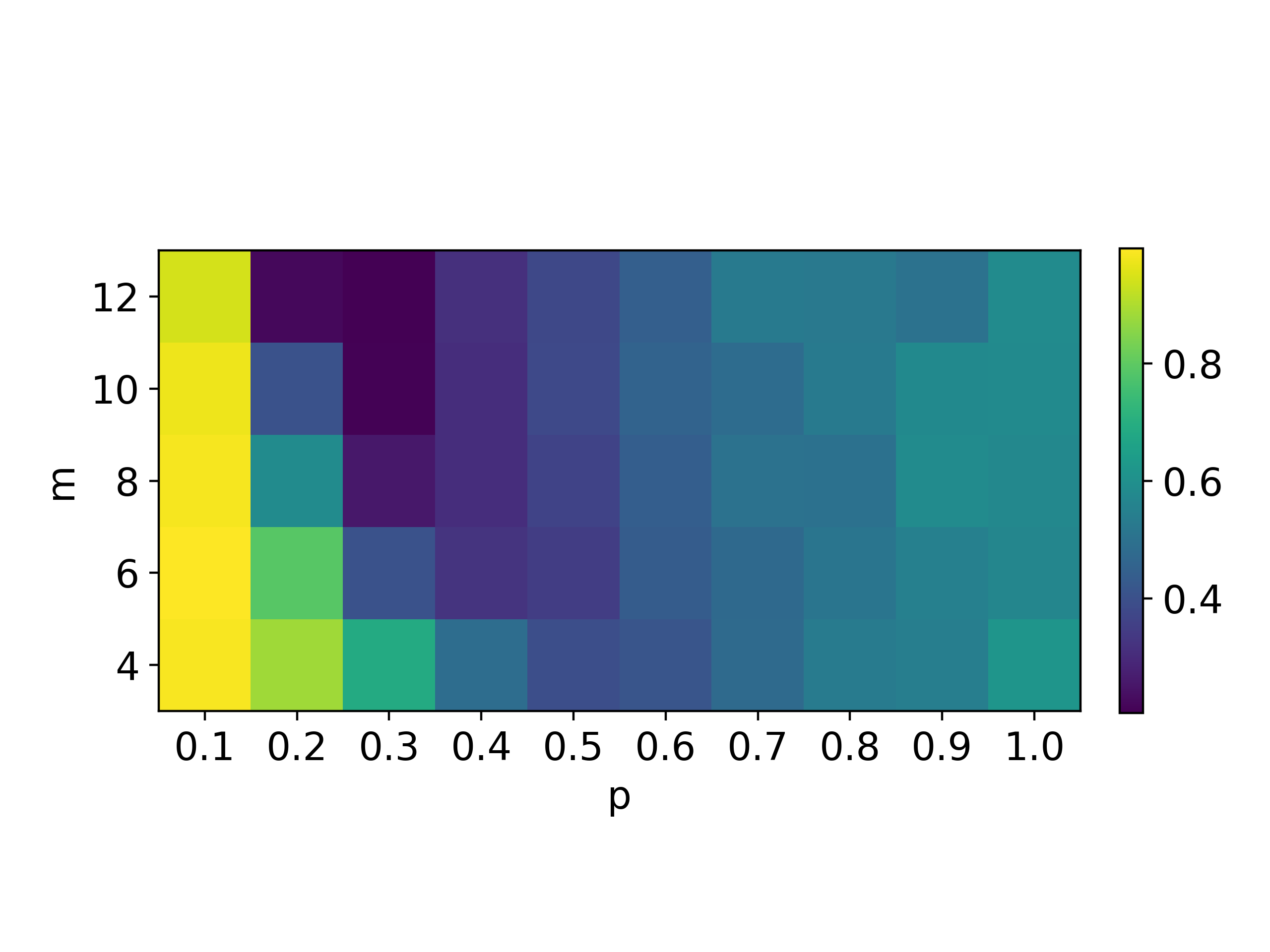}
    }
    \caption{Share of graph instances with Hafnian calculated up to relative error $0.05$ using (a)~Godsil-Gutman and~(b)~Barvinok estimators.}
    \label{fig:success_rate}
\end{figure*}

\begin{figure*}
    \centering
    \subfloat[]{
        \includegraphics[width=0.5\textwidth]{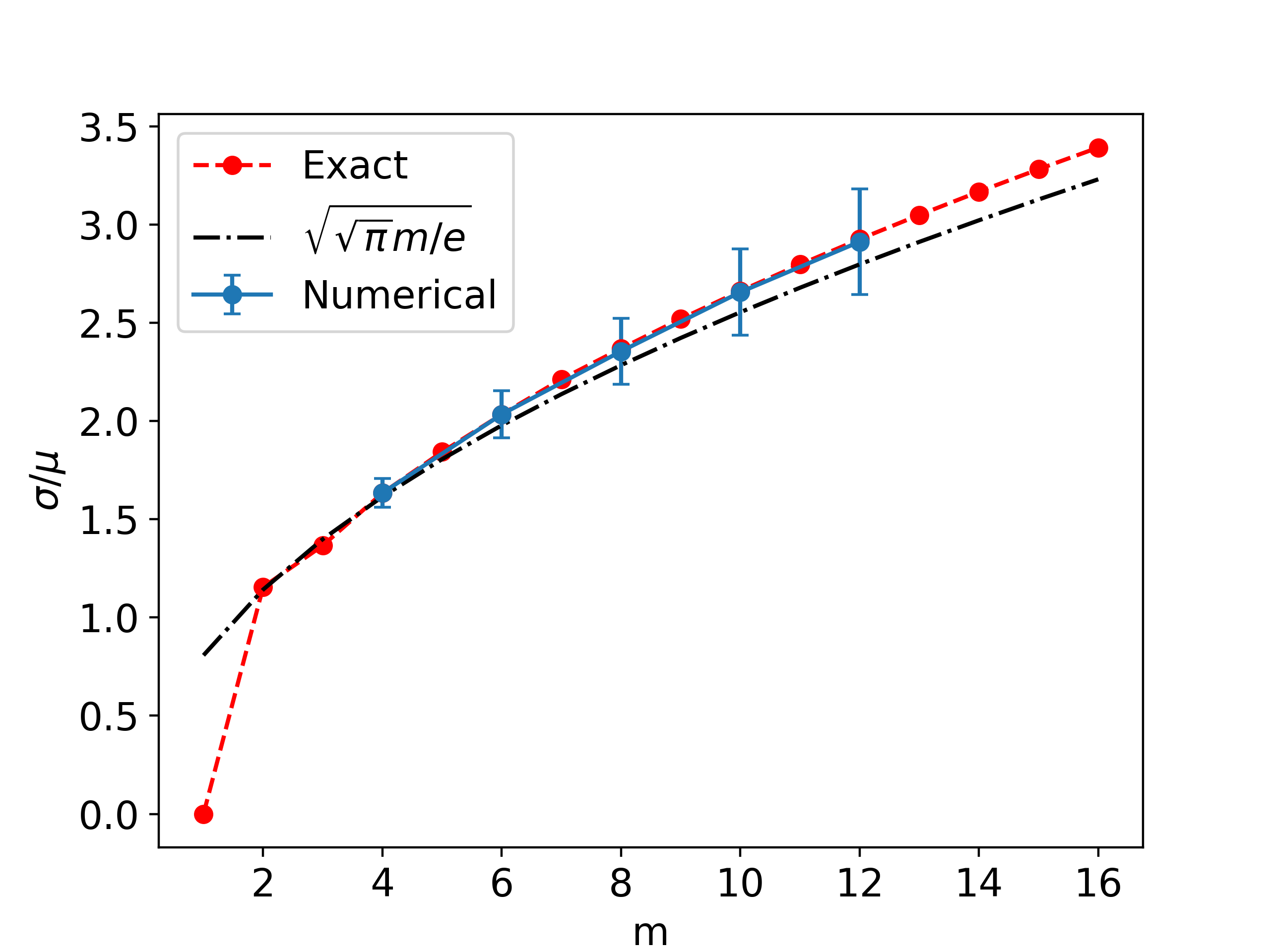}
    }
    \subfloat[]{
        \includegraphics[width=0.5\textwidth]{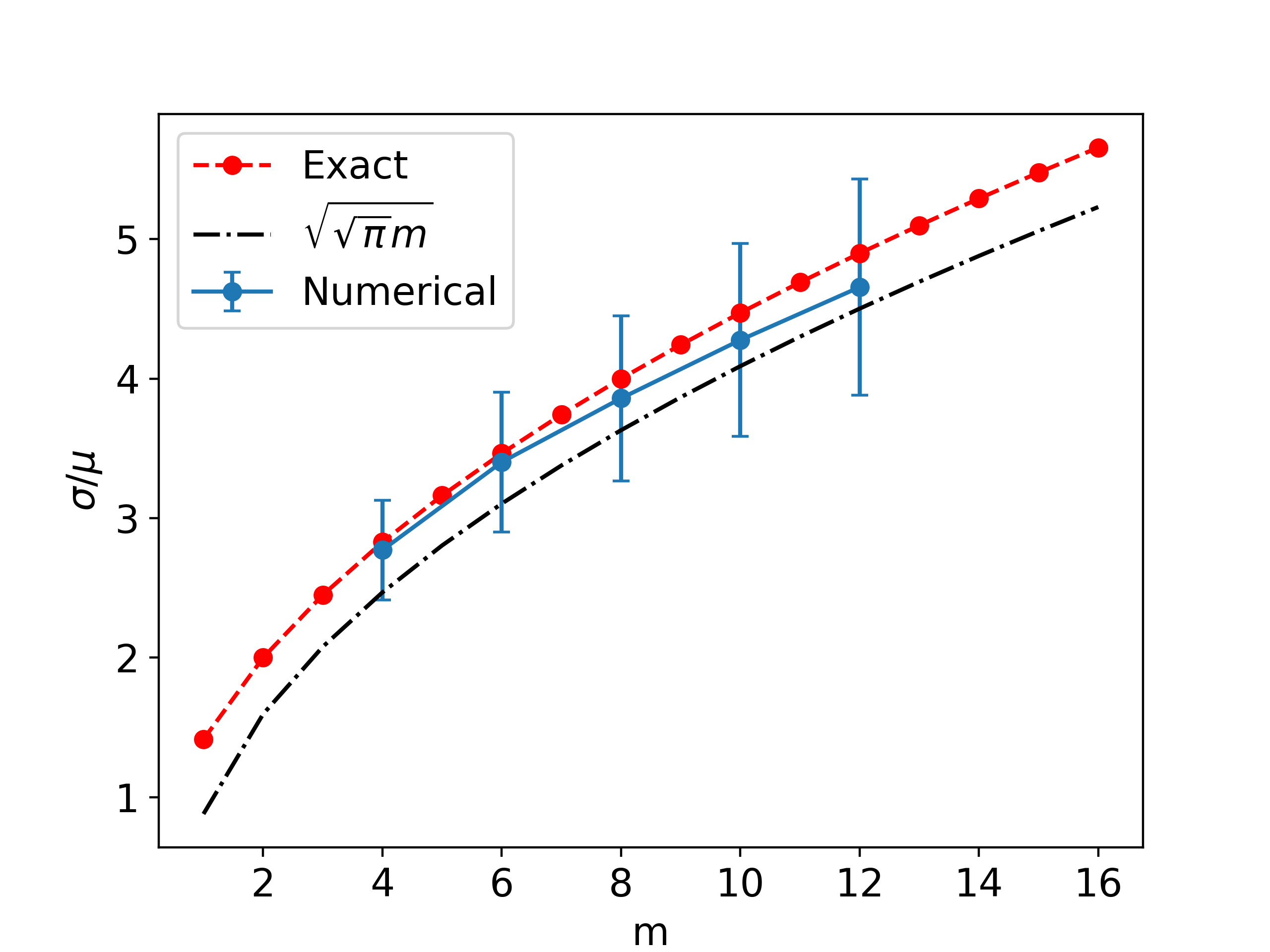}
    }
    \caption{Relative standard deviation $\sigma / \mu$ of (a) Godsil-Gutman and (b) Barvinok estimators for the complete graphs of different sizes. The dashed line shows the exact values computed using the Taylor expansion of (\ref{eq:egf_det2}), the dash-dotted line shows the asymptote as per Theorem~\ref{thm:complete_graph}.}
    \label{fig:relsigma_complete}
\end{figure*}


The sample variance alone might not contain the whole information about the estimate. For example, the disjoint graph discussed in Sec.~\ref{sec:estimator_variances} generates a distribution of determinants which has a high probability of observing zero and an extremely small probability of observing a very large value. In this case, a sample of all zeros would show zero relative sigma, which would be far from true. To account for that, we also studied the relative error of the estimator $|\bar{\mu} - \haf A| / \haf A$. Fig.~\ref{fig:success_rate} shows the share of graph instances for which the relative error is smaller than $0.05$. While there is some threshold behavior closely resembling that in Fig.~\ref{fig:relsigma}, overall there does not seem to be any situation where the Godsil-Gutman estimator would dramatically fail to estimate the Hafnian. Notably, for higher densities, the success rate appears to be constant with respect to $m$, confirming the analytical results.

\begin{figure*}
    \centering
    \includegraphics[width=\linewidth]{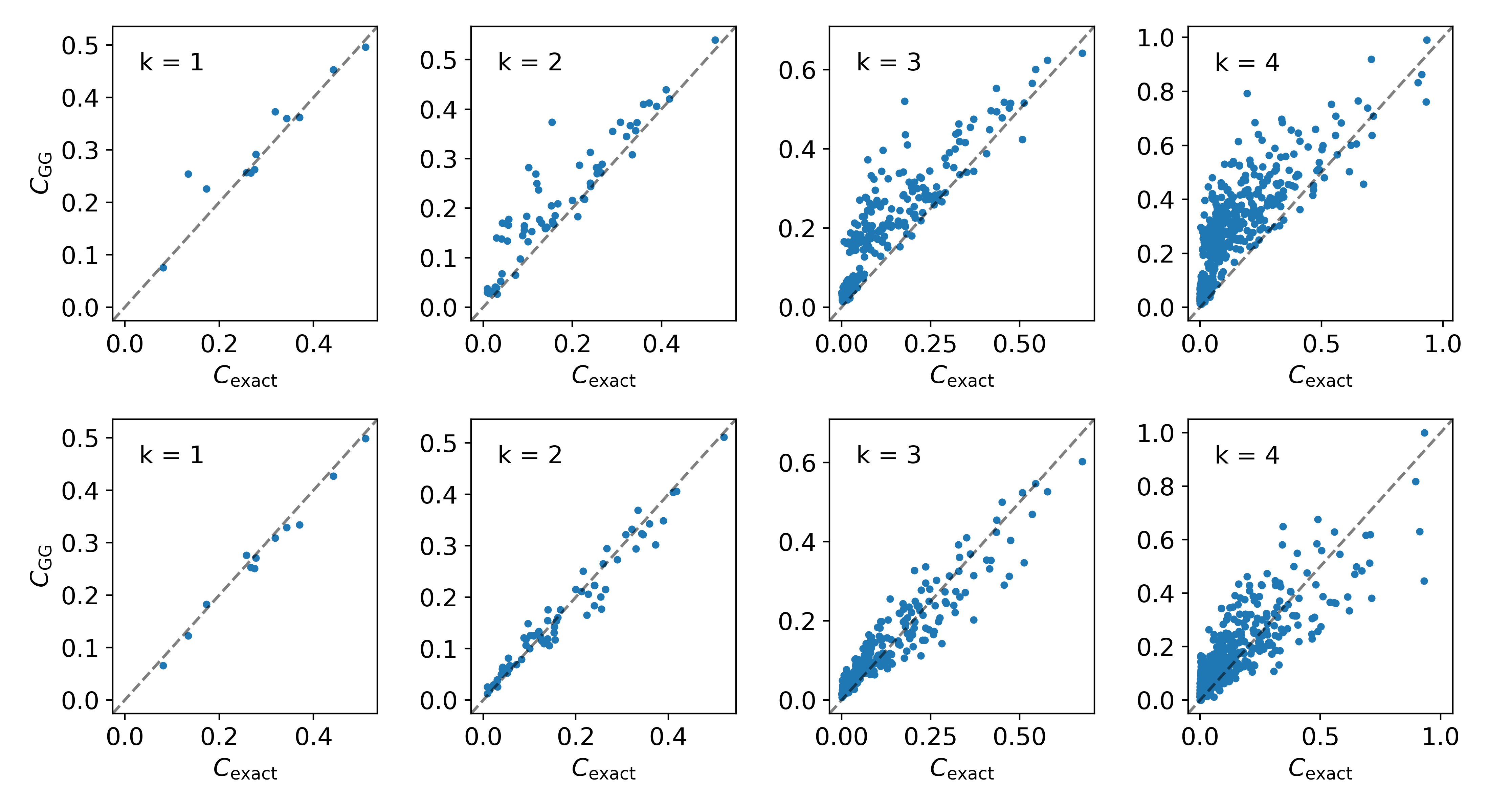}
    \caption{Correlation functions for the GBS experiment obtained with Godsil-Gutman sampling versus those obtained with exact sampling. Top row: $N=10$, bottom row: $N=1000$.}
    \label{fig:gbs_correlations}
\end{figure*}


Finally, we compared the sampling data for $p=1$, all of which correspond to the complete graphs of different sizes, to the analytical estimates of Theorem~\ref{thm:complete_graph}. As shown in Fig.~\ref{fig:relsigma_complete}, the estimated values of $\sigma / \mu$ show remarkable agreement with the theory. The asymptote $\sqrt{\sqrt{\pi} m e^\frac{\eta - 3}{2}}$ also closely follows the exact values.

\section{Classical simulation of nonnegative GBS}
\label{sec:numerics_gbs}

We now turn to testing the approximate Hafnian estimation within the GBS setting, 
which we briefly recall here. A lossless GBS experiment for an $m$-mode Gaussian state with zero means is fully specified by a symmetric $2m \times 2m$ matrix $\mathcal{A}$ with singular values $\sigma_i \in [0, 1)$, called the kernel matrix. Generally, the kernel matrix is complex, but here we restrict our attention to real and nonnegative matrices.

We consider the case of photon number resolving detectors, meaning that the detector can distinguish how many photons have arrived, as opposed to threshold detectors which can only tell whether any photons are present. The probability to observe an outcome $\mathbf{n} = (n_1, n_2, ..., n_m)$ is then equal to
\begin{equation}
    \label{eq:probability}
    P(\mathbf{n}) = \frac{1}{n_1! \dots n_m! \sqrt{\det Q}} \haf (\mathcal{A}_{\mathbf{n}}).
\end{equation}
Here $Q$ is the covariance matrix, expressed from $\mathcal{A}$ as follows:
\begin{equation}
    Q = \left(I - 
    \begin{pmatrix}
    0 & I \\ I & 0    
    \end{pmatrix} \mathcal{A} \right)^{-1}.
\end{equation}
The matrix $\mathcal{A}_{\mathbf{n}}$ is obtained from $\mathcal{A}$ in two steps. First, one constructs an intermediate matrix by taking the first column of $\mathcal{A}$ for $n_1$ times, the second column $n_2$ times, and so on until the $m$-th column, after which one repeats the column $i + m$ for $n_i$ times. Then, the matrix $\mathcal{A}_{\mathbf{n}}$ is constructed from the intermediate matrix by repeating the same procedure with the rows.

When the state is pure, the kernel matrix $\mathcal{A}$ is simplified to $\mathcal{A} = A \oplus A^*$. In this case, $A$ is also referred to as a kernel matrix. The kernel matrix of a pure state is connected to the squeezing values $r_i$ and the interferometer unitary $U$ by the Autonne-Takagi decomposition:
\begin{equation}
    A = U \operatorname{diag}(\tanh{r_1} \  \dots \  \tanh{r_m}) U^T.
\end{equation}
Graph-related applications of GBS typically work by setting $A$ to be equal to the adjacency matrix of the problem graph, up to appropriate scaling.

The method for the simulation of GBS proposed in Ref.~\cite{quesada_exact_2020} relies on the calculation of conditional probabilities of observing a particular number of photons in each waveguide. The method uses the calculation of Hafnian as a subroutine. When the kernel matrix is nonnegative, this subroutine can be replaced with the Barvinok or Godsil-Gutman estimator.

We simulated GBS for a kernel corresponding to the adjacency matrix of an Erd\"os-R\'enyi graph with $m=12$ and $p=0.5$. We produced 1000 samples, using the Godsil-Gutman estimator of the Hafnian ($N=10$ and $N=1000$ determinants per Hafnian estimation), as well as the exact Hafnian calculation. We capped the maximum number of photons per mode at $n_{max} = 4$. The kernel matrix was normalized so that the expected number of photons at $n_{max} = \infty$ would be equal to $\sqrt{m}$.

Comparing the sampled distributions directly in our case is difficult, since there are much more possible measurement outcomes than there are samples. For example, this means that we cannot correctly evaluate the Kullback-Leibler divergence or total variation distance. Instead, we study the low-order correlation functions $C(i_1, ..., i_k) = \langle n_{i_1} ... n_{i_k} \rangle$ for $k \in \{1, 2, 3, 4\}$. We denote the correlations $C_{\mathrm{GG}}$ when they are obtained from Godsil-Gutman sampling and $C_{\text{exact}}$ when they are obtained from exact sampling. The results are shown in Fig.~\ref{fig:gbs_correlations}. The samples obtained with $N=10$ demonstrate a noticeable deviation from their exact counterparts, while those obtained with $N=1000$ show good agreement with the exact sampling data. The deviation observed at $N=10$ appears to be biased in one direction: most correlations are overestimated. The populations of individual waveguides mostly turn out to be overestimated as well. It is not clear why the small sample size would produce this particular effect. However, this is likely connected to the distribution of the determinants in the Godsil-Gutman estimator. For most of the graphs we considered, the median Godsil-Gutman determinant is several times smaller than the mean Godsil-Gutman determinant. Consequently, a small sample is likely to underestimate the Hafnian, and overall this leads to a systematic shift in sampling probabilities.

\section{Conclusions}
\label{sec:conclude}

In this manuscript, we analyzed the behavior the Godsil-Gutman and Barvinok estimators of the Hafnian of a real nonnegative matrix. Despite the existence of particular examples where one needs an exponential number of samples to achieve a constant relative error, the overall performance of both methods for random graphs was surprisingly good. This shows that the usage of GBS for the densest $k$-subgraph problem is unlikely to provide an exponential speedup except possibly for some particular family of graphs.

The simulability of Boson Sampling in certain regimes has been extensively studied in the literature. Specifically, there is a substantial body of work dedicated to the simulation of Boson Sampling in imperfect conditions. In particular, it was found that Fock Boson Sampling with partially distinguishable photons~\cite{renema_efficient_2018-1}, as well as FBS with sufficient photon loss~\cite{renema_classical_2019}, can be simulated classically in polynomial time. 
There are approximate classical sampling algorithms for noisy GBS as well~\cite{villalonga_efficient_2022,popova_cracking_2022,oh_classical_2023-1}.
However, numerical experiments show that a noisy boson sampler is not much worse than a noiseless one when it is used to solve the $k$-densest subgraph problem~\cite{solomons_gaussian-boson-sampling-enhanced_2023}. Together with the efficient simulability of noisy GBS, this hints that the advantage derived from a quantum device is at best polynomial. In addition, Ref.~\cite{oh_quantum-inspired_2023-1} proposes a quantum-inspired classical algorithm for the $k$-densest subgraph problem and shows that the performance is also comparable to quantum sampling. 

In conclusion, our work, together with the evidence from the existing literature, suggests (though does not yet decisively prove) that the ability to perform GBS with a nonnegative kernel does not provide quantum advantage. We suspect that there is a method to deal with the special cases where the Godsil-Gutman estimator fails, and that by finding it one can construct a fully polynomial randomized approximation scheme (FPRAS) for the Hafnian of a nonnegative matrix.


\acknowledgments

The work of the authors was supported by Rosatom in the framework of the Roadmap for Quantum Computing (Contract No.~868-1.3-15/15-2021 dated October~5,~2021 and Contract No.~R2320 dated March~09,~2023).
The exact computation of Hafnians was performed using \texttt{The~Walrus}~\cite{gupt_walrus_2019}.

\appendix

\section{Example for $m=2$}
\label{sec:det2_example}

Let us consider the case when $A, W$, and $G$ are $4 \times 4$ matrices:

\begin{gather}
    A = \begin{pmatrix}
    0 & a_{12} & a_{13} & a_{14} \\
    a_{12} & 0 & a_{23} & a_{24} \\
    a_{13} & a_{23} & 0 & a_{34} \\
    a_{14} & a_{24} & a_{34} & 0 \\
    \end{pmatrix}, \\
    W = \begin{pmatrix}
    0 & w_{12} & w_{13} & w_{14} \\
    -w_{12} & 0 & w_{23} & w_{24} \\
    -w_{13} & -w_{23} & 0 & w_{34} \\
    -w_{14} & -w_{24} & -w_{34} & 0 \\
    \end{pmatrix}, \\
    G = 
    \begin{pmatrix}
    0 & g_{12} & g_{13} & g_{14} \\
    g_{12} & 0 & g_{23} & g_{24} \\
    g_{13} & g_{23} & 0 & g_{34} \\
    g_{14} & g_{24} & g_{34} & 0 \\
    \end{pmatrix} = \\    
    =\begin{pmatrix}
    0 & w_{12}\sqrt{a_{12}} & w_{13}\sqrt{a_{13}} & w_{14}\sqrt{a_{14}} \\
    -w_{12}\sqrt{a_{12}} & 0 & w_{23}\sqrt{a_{23}} & w_{24}\sqrt{a_{24}} \\
    -w_{13}\sqrt{a_{13}} & -w_{23}\sqrt{a_{23}} & 0 & w_{34}\sqrt{a_{34}} \\
    -w_{14}\sqrt{a_{14}} & -w_{24}\sqrt{a_{24}} & -w_{34}\sqrt{a_{34}} & 0 \\
    \end{pmatrix}.
\end{gather}
The determinant of $G$ is equal to
\begin{multline}
    \det G = g_{12}^2 g_{34}^2 + g_{13}^2 g_{24}^2 + g_{14}^2 g_{23}^2  - \\ - 2 g_{12} g_{34} g_{13} g_{24} - 2 g_{12} g_{14} g_{23} g_{34} - 2 g_{13} g_{24} g_{14} g_{23}.
\end{multline}
Taking the square of this and grouping the terms where all $g_{ij}$ are in the even power, we get:
\begin{multline}
    (\det G)^2 = g_{12}^4 g_{34}^4 + g_{13}^4 g_{24}^4
    + g_{14}^4 g_{23}^4 +\\
    + 6 g_{12}^2 g_{34}^2 g_{13}^2 g_{24}^2 
    + 6 g_{12}^2 g_{14}^2 g_{23}^2 g_{34}^2 + \\
    + 6 g_{13}^2 g_{24}^2 g_{14}^2 g_{23}^2 + \text{(odd-power terms)}.
\end{multline}
Taking the expectation value of this expression, we obtain
\begin{multline}
    \mathbb{E}(\det G)^2 = \eta^2 a_{12}^2 a_{34}^2 + \eta^2 a_{13}^2 a_{24}^2 + \eta^2 a_{14}^2 a_{23}^2 + \\
    + 6 a_{12} a_{34} a_{13} a_{24} 
    + 6 a_{12} a_{14} a_{23} a_{34} + \\
    + 6 a_{13} a_{24} a_{14} a_{23}. 
\end{multline}

Diagrammatically we can show this as follows:
\begin{multline}
    \mathbb{E}(\det G)^2 = \eta^2 \parbox{0.1\linewidth}{\includegraphics[width=\linewidth]{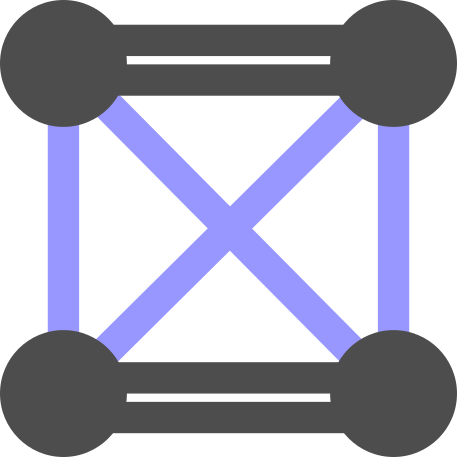}} 
    + \eta^2 \parbox{0.1\linewidth}{\includegraphics[width=\linewidth]{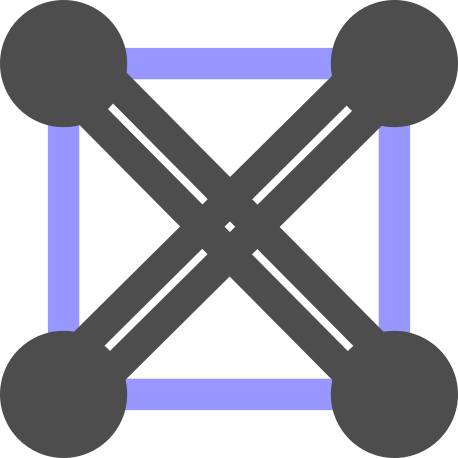}} 
    + \eta^2 \parbox{0.1\linewidth}{\includegraphics[width=\linewidth]{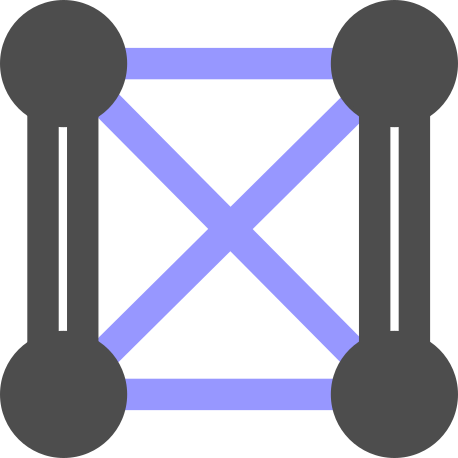}} + \\
    + 6 \parbox{0.1\linewidth}{\includegraphics[width=\linewidth]{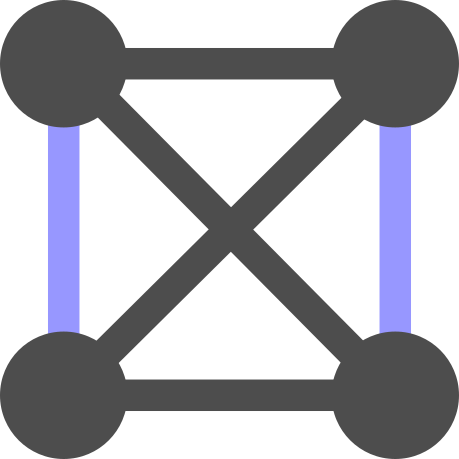}} 
    + 6 \parbox{0.1\linewidth}{\includegraphics[width=\linewidth]{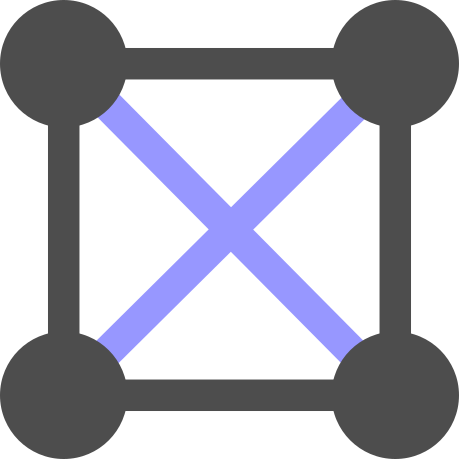}}
    + 6 \parbox{0.1\linewidth}{\includegraphics[width=\linewidth]{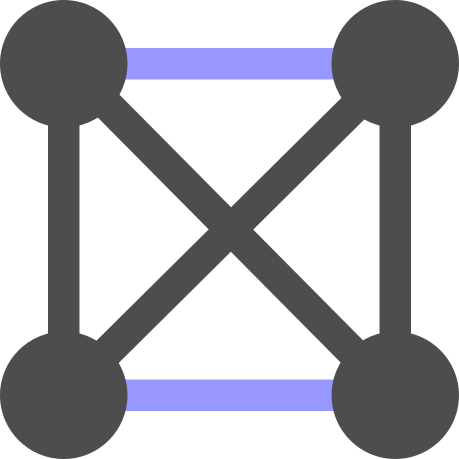}}. 
\end{multline}
Here the vertices are enumerated in clockwise order starting from top left. The edges in the first three diagrams are doubled meaning that we take the squares of the respective matrix elements.

We can observe that the squared Hafnian has a similar structure. The Hafnian of $A$ is equal to
\begin{equation}
    \haf A = a_{12} a_{34} + a_{13} a_{24} + a_{14} a_{23}.
\end{equation}
Taking the square of that and grouping the terms together, we get
\begin{multline}
    (\haf A)^2 = a_{12}^2 a_{34}^2 + a_{13}^2 a_{24}^2 + a_{14}^2 a_{23}^2 + \\
    + 2 a_{12} a_{34} a_{13} a_{24} 
    + 2 a_{12} a_{14} a_{23} a_{34} + \\
    + 2 a_{13} a_{24} a_{14} a_{23}. 
\end{multline}
Or, diagrammatically:
\begin{multline}
    (\haf A)^2 = \parbox{0.1\linewidth}{\includegraphics[width=\linewidth]{12_34.png}} 
    + \parbox{0.1\linewidth}{\includegraphics[width=\linewidth]{13_24.png}} 
    + \parbox{0.1\linewidth}{\includegraphics[width=\linewidth]{14_23.png}} + \\
    + 2 \parbox{0.1\linewidth}{\includegraphics[width=\linewidth]{hourglass.png}} 
    + 2 \parbox{0.1\linewidth}{\includegraphics[width=\linewidth]{cycle.png}}
    + 2 \parbox{0.1\linewidth}{\includegraphics[width=\linewidth]{side_hourglass.png}}. 
\end{multline}

\section{Proof of Theorem \ref{thm:complete_graph}}
\label{sec:species}


Before we prove the main result, we need to introduce a few definitions.

A species $F$ is a functor from the category of finite sets and bijections to itself~\cite{joyal_theorie_1981,bergeron_combinatorial_1997}. If $S$ is a finite set, then the elements of $F(S)$ are referred to as $F$-structures. For example, we can define a functor $\mathbf{Match}$ which maps a finite set to the set of all perfect matchings on that set. In this case, the perfect matchings are the $F$-structures. 

With each species $F$ we can associate an exponential generating function:
\begin{equation}
    F(x) = \sum_{n = 0}^{\infty} \frac{1}{n!} |F(\{1, ..., n\})| x^n.
\end{equation}
The requirement that $F$ is a functor guarantees that the cardinality of $F(X)$ only depends on the cardinality of $X$.

Crucially, there is a correspondence between the operations on the species and operations on the generating functions. We will use two of these operations: product and substitution. Let $F, G$ be two species. The product species $F \cdot G$ is a functor that maps a set $U$ to the set
\begin{equation}
    \bigsqcup_{S \subseteq U} F(S) \times G(U \setminus S).
\end{equation}
The generating function of $F \cdot G$ is the product $F(x)G(x)$ of the corresponding generating functions. The substitution $F \circ G$ is a more complicated functor. The structures created by $F \circ G$ are partitions of the finite set, each endowed with a $G$-structure, forming together an $F$-structure. It acts on sets as follows:
\begin{equation}
    (F \circ G) (S) = \bigsqcup_{\pi \in \mathrm{part}(S)} \left( F(\pi) \times \prod_{Y \in \pi} G(Y) \right).
\end{equation}
The generating function of $F \circ G$ is $F(G(x))$.

Finally, we need to introduce the weighted sets and weighted species. Let $K$ be a ring. A $K$-weighted set is a pair $(S, w)$, where $S$ is a finite set, and $w: S \rightarrow K$ is a function called weight. Given two weighted sets $(A, u), (B, v)$, the product $A \times B$ can be equipped with a weight by setting $w((a, b)) = u(a) \cdot v(b)$. The disjoint union $A \sqcup B$ can be equipped with a weight by taking the weights of the comprising subsets.

A weighted species is a functor from the category of finite sets and bijections to the category of weighted sets and weight-preserving bijections. Given a weighted species $F$, we can associate with it an exponential generating function with coefficients in $K$:
\begin{equation}
    F(x) = \sum_{n = 0}^{\infty} \frac{1}{n!} |F(\{1, ..., n\})|_w x^n.
\end{equation}
Here $|F(S)|_w = \sum_{x \in F(S)} w(x)$ is the weighted sum of the elements in $F(S)$. The operations on weighted species and their generating functions correspond to each other in exactly the same way as non-weighted species and their generating functions.

With this toolset at our disposal, we can construct a weighted species whose structures on $2m$-element sets are vertex covers consisting of matchings and cycles of even length, with weights equal to the coefficients in (\ref{eq:expected_det2}). All of these structures are present in a complete graph, so what we need is the sum of their weights. Thus, the coefficient before $x^{2m}$ in the generating function of this species will represent the second moment of the probabilistic estimator.

We consider the ring of real polynomials in one variable $\mathbb{R}[t]$ as the ring of weights. We will later set $t = \eta$ to account for the type of the estimator. Let us define the following functors (when weight is not specified, it is equal to one on all elements of the set):

\begin{enumerate}
    \item $\mathbf{Pair}_t$ maps any two-element set to $(\{\bullet \}, t)$, where $\{\bullet \}$ is a singleton set. All other sets are mapped to the empty set. The generating function is equal to $\mathbf{Pair}_t(x) = \frac{tx^2}{2}$. 
    \item $\mathbf{Exp}$ maps any set to the singleton set. Its generating function is equal to $e^x$.
    \item $\mathbf{Cycle}$ maps any set of size $4 + 2k$ to the set of all cyclic orderings on this set (up to the inverse), and all other sets to the empty set. Its generating function is equal to
    \begin{multline}
        \mathbf{Cycle}(x) = \frac12 \left( \frac{3!x^4}{4!} + \frac{5!x^6}{6!} + ... \right) =\\
        = -\frac14 \log(1 - x^2) - \frac{x^2}{4}.
    \end{multline}
    We will need a weighted functor $6\mathbf{Cycle}$, such that the weights of all elements of the output sets are equal to~6. Its generating function is thus equal to $-\frac32 \log(1 - x^2) - \frac{3x^2}{2}$.
\end{enumerate}

We will construct two more functors: $\mathbf{Match}_t$, which maps a set to its perfect matchings with weight $t$ raised to the power of the size of the matching, and $\mathbf{CycleCover}$, which maps a set to the set of its covers with cycles of even length.

Since every subgraph involved in (\ref{eq:expected_det2}) consists of a disjoint union of matchings and cycle covers, the subgraphs for the complete graph are generated by the functor $\mathbf{Match}_t \cdot \mathbf{CycleCover}$. Both the perfect matchings and cycle covers of a set can be thought of as partitions of the set together with the corresponding structure on each part. In terms of species, this means that $\mathbf{Match}_t = \mathbf{Exp} \circ \mathbf{Pair}_t$ and $\mathbf{CycleCover} = \mathbf{Exp} \circ 6\mathbf{Cycle}$. Altogether this means that the generating function for the second moments (\ref{eq:expected_det2}) is the following:
\begin{equation}
\label{eq:egf_det2}
    f(x, t) = e^{\frac{tx^2}{2}}e^{-\frac{3}{2} \log(1 - x^2) - \frac{3x^2}{2}} = \frac{\exp(\frac{(t-3)x^2}{2})}{(1 - x^2)^{3/2}}.
\end{equation}

Knowing the generating function, we can conduct the analysis of asymptotic behavior of its series expansion using the Darboux's method~\cite{wilf2005generatingfunctionology,flajolet2009analytic}. First, define $\alpha = (t - 3) / 2$ and $y = x^2$. Define $g(y) = f(x)$:
\begin{equation}
    g(y) = \frac{e^{\alpha y}}{(1 - y)^{3/2}}.
\end{equation}
Expanding the numerator around $y = 1$, we obtain the following:
\begin{multline}
    g(y) = e^\alpha \left( (1 - y)^{-3/2} - \alpha (1 - y)^{-1/2} \right. + \\
    \left. + \frac{\alpha^2}{2} \sqrt{1-y} + R(y) \right).
\end{multline}
Here $R(y)$ contains all the remaining terms of the Taylor expansion and is once continuously differentiable at $|y|= 1$. The $n$-th term in the series expansion of $g(y)$ is then equal to
\begin{multline}
    [y^n] g(y) = e^\alpha \left( \binom{n + 1/2}{n} - \alpha \binom{n - 1/2}{n} + \right. \\
    \left. + \frac{\alpha^2}{2} \binom{n - 3/2}{n}+ o\left(\frac{1}{n}\right) \right).
\end{multline}
Here the last term of the expansion is vanishing faster than $1 / n$ due to the Darboux's theorem. Expanding the binomial coefficients and using the Striling's formula for the gamma function, we obtain:
\begin{equation}
    [y^n] g(y) = \frac{2}{\sqrt{\pi}}e^\alpha \sqrt{n} + O\left(\sqrt{\frac{1}{n}}\right).
\end{equation}
This means that the expected square of the determinant (\ref{eq:expected_det2}) for a complete graph on $2m$ vertices tends to $\frac{2}{\sqrt{\pi}}e^\alpha \sqrt{m} (2m)!$ as $m$ goes to infinity. To calculate the relative variance, we recall that the number of perfect matchings in a complete graph is equal to $(2m - 1)!!$. Gathering all the terms and using once again the Stirling's formula for the factorials, we obtain:
\begin{multline}
    \frac{\mathbb{E} \det G^2}{(\haf A)^2} = \frac{(2m)!(\frac{2}{\sqrt{\pi}}e^\alpha \sqrt{m} + O(1/\sqrt{m}))}{((2m-1)!!)^2} = \\
    =  \sqrt{\pi} m e^{\alpha} + O(1).
    \label{eq:asymptote}
\end{multline}
Thus, the relative error of either estimator grows as $\frac{\sqrt{m}}{\sqrt{N}}$, which implies that the number of samples $N$ should grow linearly with the size of the graph to preserve the same relative tolerance. Surprisingly enough, the fourth moment $\eta$ only changes the constant prefactor and not the overall scaling.

\bibliography{refs}

\end{document}